
\documentclass[journal]{IEEEtran}
%
% If IEEEtran.cls has not been installed into the LaTeX system files,
% manually specify the path to it like:
% \documentclass[journal]{../sty/IEEEtran}

\usepackage{amsmath,amssymb,amsfonts}
\usepackage{algorithmic}
\usepackage{array}
\usepackage[subrefformat=parens,labelformat=parens,caption=false,font=normalsize,labelfont=rm,textfont=rm]{subfig}
\usepackage{textcomp}
\usepackage{stfloats}
\usepackage{url}
\usepackage{verbatim}
\usepackage{graphicx}
\usepackage{cite}
\hyphenation{op-tical net-works semi-conduc-tor IEEE-Xplore}
\usepackage{balance}
\usepackage{empheq}
\usepackage{amsthm}

\usepackage{float}
\newtheorem{lemma}{Lemma}
\theoremstyle{remark}
\newtheorem{remark}{Remark}
\theoremstyle{plain}
\newtheorem{theorem}{Theorem}
\usepackage{tikz}
\usepackage[subnum]{cases}

% Some very useful LaTeX packages include:
% (uncomment the ones you want to load)

% *** MISC UTILITY PACKAGES ***
%
%\usepackage{ifpdf}
% Heiko Oberdiek's ifpdf.sty is very useful if you need conditional
% compilation based on whether the output is pdf or dvi.
% usage:
% \ifpdf
%   % pdf code
% \else
%   % dvi code
% \fi
% The latest version of ifpdf.sty can be obtained from:
% http://www.ctan.org/pkg/ifpdf
% Also, note that IEEEtran.cls V1.7 and later provides a builtin
% \ifCLASSINFOpdf conditional that works the same way.
% When switching from latex to pdflatex and vice-versa, the compiler may
% have to be run twice to clear warning/error messages.

% *** CITATION PACKAGES ***
%
%\usepackage{cite}
% cite.sty was written by Donald Arseneau
% V1.6 and later of IEEEtran pre-defines the format of the cite.sty package
% \cite{} output to follow that of the IEEE. Loading the cite package will
% result in citation numbers being automatically sorted and properly
% "compressed/ranged". e.g., [1], [9], [2], [7], [5], [6] without using
% cite.sty will become [1], [2], [5]--[7], [9] using cite.sty. cite.sty's
% \cite will automatically add leading space, if needed. Use cite.sty's
% noadjust option (cite.sty V3.8 and later) if you want to turn this off
% such as if a citation ever needs to be enclosed in parenthesis.
% cite.sty is already installed on most LaTeX systems. Be sure and use
% version 5.0 (2009-03-20) and later if using hyperref.sty.
% The latest version can be obtained at:
% http://www.ctan.org/pkg/cite
% The documentation is contained in the cite.sty file itself.

% *** GRAPHICS RELATED PACKAGES ***
%
\ifCLASSINFOpdf
% \usepackage[pdftex]{graphicx}
% declare the path(s) where your graphic files are
% \graphicspath{{../pdf/}{../jpeg/}}
% and their extensions so you won't have to specify these with
% every instance of \includegraphics
% \DeclareGraphicsExtensions{.pdf,.jpeg,.png}
\else
% or other class option (dvipsone, dvipdf, if not using dvips). graphicx
% will default to the driver specified in the system graphics.cfg if no
% driver is specified.
% \usepackage[dvips]{graphicx}
% declare the path(s) where your graphic files are
% \graphicspath{{../eps/}}
% and their extensions so you won't have to specify these with
% every instance of \includegraphics
% \DeclareGraphicsExtensions{.eps}
\fi
    \hyphenation{op-tical net-works semi-conduc-tor}

    \begin{document}
        %
        % paper title
        % Titles are generally capitalized except for words such as a, an, and, as,
        % at, but, by, for, in, nor, of, on, or, the, to and up, which are usually
        % not capitalized unless they are the first or last word of the title.
        % Linebreaks \\ can be used within to get better formatting as desired.
        % Do not put math or special symbols in the title.
        \title{A Practical Max-Min Fair Resource Allocation Algorithm for Rate-Splitting Multiple Access}
        \author{Facheng Luo,~\IEEEmembership{Student Member,~IEEE}, Yijie Mao,~\IEEEmembership{Member,~IEEE}
        \thanks{\textit{Corresponding author: Yijie Mao.} 
        \par
        This work has been supported in part by the National Nature Science Foundation of China under Grant 62201347; and in part by Shanghai Sailing Program under Grant 22YF1428400. 
        %(Corresponding author: Yijie Mao)
\par 
F. Luo and Y. Mao are with the School of Information Science and Technology, ShanghaiTech University, Shanghai 201210, China (email: \{luofch2022, maoyj\}@shanghaitech.edu.cn).}
	\vspace{-0.5cm}}
    \maketitle
    
    % As a general rule, do not put math, special symbols or citations
    % in the abstract or keywords.
    \begin{abstract}
        This letter introduces a novel resource allocation algorithm for achieving max-min fairness (MMF) in a rate-splitting multiple access (RSMA) empowered multi-antenna broadcast channel.
       Specifically, we derive the closed-form solution for the optimal allocation of the common rate among users and the power between the common and private streams for a given practical low-complexity beamforming direction design.
        Numerical results show that the proposed algorithm achieves $90\%$ of the MMF rate on average obtained by the conventional iterative optimization algorithm while only takes an average of $0.1$ millisecond computational time,
        which is three orders of magnitude lower than the conventional  algorithm.
        It is therefore a practical resource allocation algorithm for RSMA.
    \end{abstract}
    
    % Note that keywords are not normally used for peerreview papers.
    \begin{IEEEkeywords}
        Max-min fairness, rate-splitting multiple access, low-complexity design.
    \end{IEEEkeywords}

    % For peer review papers, you can put extra information on the cover
    % page as needed:
    % \ifCLASSOPTIONpeerreview
    % \begin{center} \bfseries EDICS Category: 3-BBND \end{center}
    % \fi
    %
    % For peerreview papers, this IEEEtran command inserts a page break and
    % creates the second title. It will be ignored for other modes.
    \IEEEpeerreviewmaketitle

    \vspace{-0.5cm}
    \section{Introduction}
    % The very first letter is a 2 line initial drop letter followed
    % by the rest of the first word in caps.
    % 
    % form to use if the first word consists of a single letter:
    % \IEEEPARstart{A}{demo} file is ....
    % 
    % form to use if you need the single drop letter followed by
    % normal text (unknown if ever used by the IEEE):
    % \IEEEPARstart{A}{}demo file is ....
    % 
    % Some journals put the first two words in caps:
    % \IEEEPARstart{T}{his demo} file is ....
    % 
    % Here we have the typical use of a "T" for an initial drop letter
    % and "HIS" in caps to complete the first word.
    \IEEEPARstart{R}{ecent} study \cite{outperform} introduced rate-splitting multiple access (RSMA),
    a powerful multiple access scheme that has demonstrated significant potential in future wireless communication systems.
    In the downlink RSMA, the message intended for each user is divided into two components: a common part and a private part.
    The common parts from all users are combined into a single common message and encoded using a shared codebook while each user's private part is encoded independently.
    The encoded symbol streams are linearly precoded and transmitted simultaneously,
    and each receiver decodes both the common stream and its corresponding private stream to recover the intended message\cite{phylayer}.
    %%%
    In the presence of weak or strong interference, RSMA can adaptively adjust the power and rate allocation, and reduces to either space division multiple access (SDMA) or non-orthogonal multiple access (NOMA).
    %%%
    As a result, RSMA naturally bridges and outperforms both SDMA and NOMA in terms of spectral efficiency, energy efficiency, user fairness, etc\cite{mao2022rate}.
    
    To enhance the user fairness  (which is known as the max-min fairness--MMF achieved by maximizing the worst-case rate among users) enhancement offered by RSMA, a crucial task is to design resource allocation algorithms for beamforming and common rate allocation, as extensively studied in recent works \cite{mao2022rate}.
    Due to the non-convexity of the MMF resource allocation problems, various suboptimal algorithms have been developed to tackle this challenge, such as the weighted minimum mean-square error (WMMSE) algorithm \cite{robust, yalcin2021}, generalized power iteration (GPI)-based algorithm \cite{kim2022max} and  successive convex approximation (SCA)-based algorithm \cite{mao2018energy, lee2022max, li2023}.
    These algorithms all adhere to the same design paradigm. They aim to tackle the inherent non-convexity of the MMF problem by approximating it as a sequence of convex subproblems, and solving these subproblems in a sequential manner. However, owing to their iterative nature, these algorithms still present practical challenges when applied in real-world scenarios.
    When it comes to designing low-complexity algorithms, there is a lack of research specifically addressing the MMF problem for RSMA.
    \cite{unifying} provides a low-complexity method for precoder design and power allocation for RSMA. But it is specifically tailored for the sum-rate problem and is not applicable for the MMF rate problem.
    To the best of our knowledge, no prior research has managed to derive an analytical solution for the MMF problem for RSMA.

    \vspace{0.1cm}
    In this letter, we bridge this gap by unveiling the analytical solution for solving the MMF resource allocation problem in a two-user RSMA system.
    By fixing the precoding directions using low-complexity zero-forcing (ZF) and multicast beamforming, we calculate the closed-form solutions  for the optimal common rate and power allocation to maximize the worst-case rate among users.
    Our numerical results demonstrate that the proposed algorithm achieves MMF rates that are 90\% of those achieved by the WMMSE/SCA algorithm on average, and outperforms GPI-based algorithm.
    It only takes an average of 0.1 millisecond computational time, which is three orders of magnitude lower than the conventional  algorithms.
    
     \vspace{-2mm}
    
    \section{System Model and Problem Formulation}
   Consider a two-user multiple-input single-output (MISO) broadcast channel (BC) with a base station (BS) simultaneously serving two single-antenna users.
   The BS has $N_t$ transmit antennas and utilizes RSMA to send messages $M_1$ and $M_2$ to user-1 and user-2, respectively.
   The messages are split into $M_{c, 1}$, $M_{p, 1}$ for user-1 and $M_{c, 2}$, $M_{p, 2}$ for user-2. $M_{c, 1}$ and $M_{c, 2}$ are combined into single common message $M_c$.
   The reformed messages $M_{p, 1}$, $M_{p, 2}$ and $M_{c}$ are independently encoded into $s_1$, $s_2$ and $s_c$.
 %   The common stream $s_c$ is required to be decoded by both users, while the two private streams $s_1$, $s_2$ are decoded respectively by user-1 and user-2 only. 
    After linear precoding
    via precoders $\mathbf{p}_1$, $\mathbf{p}_2$, $\mathbf{p}_c$ $\in \mathbb{C}^{N_t\times 1}$, the resulting transmit signal is
    \vspace{-3mm}
    \begin{equation}
        \mathbf{x} = \mathbf{p}_1s_1 + \mathbf{p}_2s_2 + \mathbf{p}_cs_c.
    \end{equation}
    %\vspace{-3mm}
    Let $\mathbf{s} = [s_1, s_2, s_c]^T$, $\mathbb{E}\{\mathbf{ss}^H\} = \mathbf{I}$, and $P_i = \|\mathbf{p}_i\|^2$, $i \in \{1, 2, c\}$, we have the power constraint $P_1+P_2+P_c\leq P$.
    
    The received signal at user-$k$ is
    \begin{equation}
        y_k = \mathbf{h}_k^H\mathbf{x} + n_k,\ k\in\{1, 2\},
    \end{equation}
    where $\mathbf{h}_k \in \mathbb{C}^{N_t\times 1}$ is the channel between the BS
    and user-$k$, and it is assumed to be perfectly known at both the BS and user-$k$
    \footnote{In the scenario of imperfect CSIT with bounded error, the MMF problem can be solved by searching in the channel uncertainty region to ensure the worst-case performance.}.
    Without loss of generality, we make the assumption that the channel strength of user-1 is greater than or equal to that of user-2, denoted as $|\mathbf{h}_1| \geq |\mathbf{h}_2|$.
   $n_k$ represents the additive white Gaussian noise (AWGN) encountered by user-$k$, which has a zero mean and unit variance.

    %%%
    \par
    Initially, each user decodes the common stream $s_c$ with the
    private streams being considered as noise.
    %considering the private streams as noise.
    Subsequently, each user utilizes successive interference cancellation (SIC) to eliminate the shared stream $s_c$ from their received signals and then decodes the intended private stream by treating the interference from the other stream as noise.
    The achievable rates to decode the common and private streams at user-$k$ are 
    \begin{align}
        R_{c, k} = \log_2 \left(1 + \frac{|\mathbf h_k^H\mathbf p_c|^2}
        {1 + |\mathbf h_k^H\mathbf p_1|^2 + |\mathbf h_k^H\mathbf p_2|^2}\right), \label{3}\\
        R_k = \log_2 \left(1 + \frac{|\mathbf h_k^H\mathbf p_k|^2}
        {1 + |\mathbf h_k^H\mathbf p_j|^2}\right),\ j \neq k. \label{4}
    \end{align}
    In order to ensure successful decoding of the common stream $s_c$ by both users,
    the achievable common rate must not exceed
    \begin{equation}
        R_c = \min\left\{R_{c, 1}, R_{c, 2}\right\}.
    \end{equation}
    Let $C_k$ be the part of $R_c$ allocated to user-$k$, and it satisfies $0 \leq C_1 + C_2 \leq R_c$.
    The total achievable rate of user-$k$ is
    \begin{equation}
        R_{k, tot} = R_k + C_k.
    \end{equation}
    Hence the MMF rate or symmetric rate is expressed as $R_{\text{MMF}} = \min\{R_{1, tot}, R_{2, tot}\}$.
    
    In this paper, we aim at solving the following MMF optimization problem
    \vspace{-2mm}
    \begin{subequations}\label{maxmin}
        \begin{align}
            &\max_{\mathbf p_1, \mathbf p_2, \mathbf p_c, C_1, C_2} \min \{R_{1, tot}, R_{2, tot}\} \\
            \mathrm{s.t.}\ & C_1 + C_2 \leq R_c, \label{maxminb}\\
            &C_k \geq 0, k \in \{1, 2\}, \label{7c}\\
            &\|\mathbf p_1\|^2 + \|\mathbf p_2\|^2 + \|\mathbf p_c\|^2 \leq P. \label{powerCondition}
        \end{align}
    \end{subequations}
    When $\mathbf{p}_c$ is set to $\mathbf{0}$, RSMA reduces to SDMA.
    When both $\mathbf{p}_1$ and $\mathbf{p}_2$ are set to $\mathbf{0}$, RSMA reduces to Multicast.
    When either $\mathbf{p}_1$ or $\mathbf{p}_2$ is set to $\mathbf{0}$, and (\ref{maxminb}) and (\ref{7c}) are removed, RSMA reduces to NOMA.

    \section{The Proposed Resource Allocation Algorithm}
    \label{sec: algorithm}
    To derive a tractable MMF rate expression, we follow \cite{unifying} to design the precoding directions based on ZF and multicast beamforming.
    \par
    Let $\bar{\mathbf h}_k = \mathbf h_k / \|\mathbf h_k\|$, $k \in \{1, 2\}$.
    The precoding directions for the private streams are designed based on ZF,
    which leads to $|\mathbf h_1^H\mathbf p_2| = 0$, $|\mathbf h_2^H\mathbf p_1| = 0$ and
    $|\mathbf h_k^H\mathbf p_k| = \|\mathbf h_k\|^2 \rho P_k$, where
    $\rho = 1 - |\bar{\mathbf h}_1^H\bar{\mathbf h}_2| ^ 2$.
    The precoder for the common stream is $\mathbf p_c = \sqrt{P_c}\bar{\mathbf p}_c$, and
    $\bar{\mathbf p}_c$ is designed by solving
    \begin{equation}
        \max_{\bar{\mathbf p}_c}\min\{|\bar{\mathbf h}_1^H\bar{\mathbf p}_c|^2,
        |\bar{\mathbf h}_2^H\bar{\mathbf p}_c|^2\},\ \mathrm{s.t.}\ \|\bar{\mathbf p}_c\| = 1.
        \label{pc}
    \end{equation}
    According to \cite{unifying},
    the optimal solution of (\ref{pc}) is achieved when
    $|\bar{\mathbf h}_1^H\bar{\mathbf p}_c| =
    |\bar{\mathbf h}_2^H\bar{\mathbf p}_c|$, which is given by
    \begin{equation}
        \bar{\mathbf p}_c = \frac{1}{\sqrt{2(1 + |\bar{\mathbf h}_1^H\bar{\mathbf h}_2|)}}
        \left(\bar{\mathbf h}_1 + \bar{\mathbf h}_2 e^
        {-j\angle\bar{\mathbf h}_1^H\bar{\mathbf h}_2}\right).
    \end{equation}
    
    For the simplicity of notation, we define $\rho_k = \|{\mathbf h}_k\|^2\rho$ and $\rho_{c, k} = |\mathbf h_k^H\bar{\mathbf p}_c|^2$.
    Based on the above precoding direction design, (\ref{3}) and (\ref{4})
    can be rewritten as
    \begin{equation}\label{10}
        R_{c, k} = \log_2\left( 1 + \frac{\rho_{c, k}P_c}{1 + \rho_k P_k}\right),\ k \in \{1, 2\},
    \end{equation}
    and
    \begin{equation}\label{11}
        R_k = \log_2 (1 + \rho_k P_k),\ k \in \{1, 2\},
    \end{equation}
    respectively.
    
    Substituting (\ref{10}) and (\ref{11}) back to (\ref{maxmin}),
    the optimization variables of (\ref{maxmin}) reduce to the power allocation,
    i.e., $P_1$, $P_2$, $P_c$ and the common rate allocation $C_1$, $C_2$.
    Let $tP$ be the power of private streams and $(1 - t)P$ be the power of common stream, i.e.,
    \begin{equation}
        P_1 + P_2 = tP,\ P_c = (1 - t)P,\ t \in [0, 1].
    \end{equation}
    
   We further design the power allocation $P_1$ and $P_2$
    between the private streams by the simple water-filling (WF) solution \cite{unifying}, which is given by
    \begin{equation}\label{water filling}
        P_k = \max\left\{ \mu -\frac{1}{\rho_k}, 0 \right\},\ k \in \{1, 2\},
    \end{equation}
    where $\mu$ is the water level satisfying $P_1 + P_2 = tP$.
    We will show in the following that the WF solution (\ref{water filling}) leads to a satisfactory MMF rate performance
    and it is helpful for the derivation of the  closed-form optimal solution of $t, C_1, C_2$.
    
    Let $\Gamma = \frac{1}{\rho_2} - \frac{1}{\rho_1}$, according to (\ref{water filling}),
    $P_1$ and $P_2$ are given by
    \begin{equation}\label{WF}
        \left\{\begin{array}{lll}
            P_1 = tP, & P_2 = 0, & \text{if }tP \leq \Gamma, \\
            P_1 = \frac{1}{2}(tP + \Gamma), & P_2 = \frac{1}{2}(tP - \Gamma),
            & \text{if }tP > \Gamma.
        \end{array}\right.
    \end{equation}
    Obviously, RSMA is activated when $0 < P_1 < P$, $0 < P_2 < P$, and $0 < P_c < P$. RSMA reduces to SDMA when $0 < P_1 < P$, $0 < P_2 < P$ and $P_c = 0$.
    RSMA reduces to Multicast when $P_1 = 0$, $P_2 = 0$ and $P_c = P$, and RSMA reduces to NOMA when $0 < P_1 < P$, $P_2 = 0$ and $0 < P_c < P$.
    According to (\ref{WF}), it is easy to obtain that RSMA is activated  when $\frac{\Gamma}{P} < t < 1$. It reduces to NOMA when $0 < t \leq \frac{\Gamma}{P}$,  it reduces to SDMA when $t = 1$ and Multicast when $t = 0$.
    
    With the aforementioned ZF-based precoding direction design and WF power allocation for the private streams, problem (\ref{maxmin}) is simplified to
    \begin{subequations}\label{reducedProblem1}
        \begin{align}
            &\max_{t, C_1, C_2} \min \{R_{1, tot}, R_{2, tot}\} \\
            \mathrm{s.t.}\ & C_1 + C_2 \leq R_c, \\
            &C_k \geq 0, k \in \{1, 2\}, \\
            &0 \leq t \leq 1.
        \end{align}
    \end{subequations}

    Next, we  derive the  solution of (\ref{reducedProblem1}) in the closed form.
    We first introduce the following two Lemmas to find the optimal common rate allocation of RSMA.
    
    \begin{lemma}\label{RcLemma}
        When $tP > \Gamma$, the decoding rates for the common stream at both users are always equal, i.e.,
        \begin{equation}
            R_c = R_{c, 1} = R_{c, 2}.
        \end{equation}
    \end{lemma}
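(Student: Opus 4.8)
The plan is to show that, on the WF branch $tP>\Gamma$, the two SINR arguments inside the logarithms in (\ref{10}) are identical, which immediately forces $R_{c,1}=R_{c,2}$ and hence $R_c=\min\{R_{c,1},R_{c,2}\}$ equals this common value. Since the factor $P_c=(1-t)P$ is shared by both users, the whole claim reduces to verifying the single identity
\[
\frac{\rho_{c,1}}{1+\rho_1 P_1}=\frac{\rho_{c,2}}{1+\rho_2 P_2}.
\]

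First I would exploit the equal-projection property of the multicast direction. By the construction of $\bar{\mathbf p}_c$ in (\ref{pc}) we have $|\bar{\mathbf h}_1^H\bar{\mathbf p}_c|=|\bar{\mathbf h}_2^H\bar{\mathbf p}_c|=:\beta$, so that $\rho_{c,k}=|\mathbf h_k^H\bar{\mathbf p}_c|^2=\|\mathbf h_k\|^2|\bar{\mathbf h}_k^H\bar{\mathbf p}_c|^2=\|\mathbf h_k\|^2\beta^2$; combined with $\rho_k=\|\mathbf h_k\|^2\rho$ this gives $\rho_{c,1}/\rho_{c,2}=\|\mathbf h_1\|^2/\|\mathbf h_2\|^2=\rho_1/\rho_2$. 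Next I would substitute the $tP>\Gamma$ branch of (\ref{WF}), $P_1=\tfrac12(tP+\Gamma)$ and $P_2=\tfrac12(tP-\Gamma)$ with $\Gamma=\tfrac1{\rho_2}-\tfrac1{\rho_1}$, into the SINR denominators. Using $\rho_1\Gamma=\rho_1/\rho_2-1$ and $\rho_2\Gamma=1-\rho_2/\rho_1$, a short computation yields
\[
1+\rho_1 P_1=\frac{\rho_1+\rho_2+\rho_1\rho_2 tP}{2\rho_2},\qquad 1+\rho_2 P_2=\frac{\rho_1+\rho_2+\rho_1\rho_2 tP}{2\rho_1},
\]
so $\dfrac{1+\rho_1P_1}{1+\rho_2P_2}=\dfrac{\rho_1}{\rho_2}$. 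Comparing this with $\rho_{c,1}/\rho_{c,2}=\rho_1/\rho_2$ establishes the displayed identity, hence $R_{c,1}=R_{c,2}$ and therefore $R_c=R_{c,1}=R_{c,2}$.

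The argument is almost entirely elementary algebra; the two points to keep track of are (i) that $tP>\Gamma$ is exactly what makes $P_2\ge 0$ and selects the second branch of (\ref{WF}), so the expressions used are the valid ones, and (ii) the cancellation that collapses both denominators to the symmetric numerator $\rho_1+\rho_2+\rho_1\rho_2 tP$. I do not anticipate a real obstacle: the content of the lemma is just that ZF precoding together with the equal-gain multicast beamformer makes $\rho_{c,k}$ proportional to $\rho_k$, while water-filling makes $1+\rho_kP_k$ proportional to $1/\rho_k^{-1}$ with the same proportionality constant, so the per-user SINRs for the common stream coincide.
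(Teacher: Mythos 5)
Your proof is correct and follows essentially the same route as the paper: reduce the claim to the identity $\rho_{c,1}/(1+\rho_1P_1)=\rho_{c,2}/(1+\rho_2P_2)$, substitute the $tP>\Gamma$ branch of the water-filling solution, and invoke the equal-projection property $|\bar{\mathbf h}_1^H\bar{\mathbf p}_c|=|\bar{\mathbf h}_2^H\bar{\mathbf p}_c|$; the only cosmetic difference is that you compare the ratios $\rho_{c,1}/\rho_{c,2}$ and $(1+\rho_1P_1)/(1+\rho_2P_2)$, whereas the paper writes both SINRs as the same explicit fraction with common denominator $\rho tP + \|\mathbf h_1\|^{-2}+\|\mathbf h_2\|^{-2}$. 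Your algebra checks out, so no changes are needed.
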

    
    \begin{proof}
        See Appendix \ref{RcProof}.
    \end{proof}
    
    \begin{lemma}\label{maxminLemma}
        When RSMA is activated, the objective function of (\ref{maxmin}) is equivalent to
        \begin{equation}\label{17}
            R_{\text{MMF}}^{\text{RSMA}} = 
            \frac{1}{2}\left(R_1 + R_2 + R_c -
            \max\left\{R_1 - R_2 -R_c, 0\right\}\right).
        \end{equation}
        Furthermore, the optimal $C_1$, $C_2$, and MMF rate are given by
        \begin{align}
            &\left\{
            \begin{array}{ll}
                C_1^* = 0, \\
                C_2^* = R_c, \\
                R_{\text{MMF}}^{\text{RSMA}} = R_2 + R_c,
            \end{array}\right.
            &{\text{if}}\ R_c \leq R_1 - R_2, \\
            &\left\{
            \begin{array}{ll}
                C_1^* = (R_c - R_1 + R_2) / 2, \\
                C_2^* = (R_c + R_1 - R_2) / 2, \\
                R_{\text{MMF}}^{\text{RSMA}} = (R_1 + R_2 + R_c) / 2,
            \end{array}\right.
            &{\text{if}}\ R_c > R_1 - R_2.
        \end{align}
    \end{lemma}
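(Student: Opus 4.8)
The plan is to solve (\ref{maxmin}) (equivalently (\ref{reducedProblem1})) in two stages. First, fix the private/common power split $t$, so that $R_1,R_2$ via (\ref{11}) and $R_c$ via (\ref{10})--(\ref{WF}) become known constants, and optimize only over the common-rate split $(C_1,C_2)$. Then the value of this inner maximization, as a function of $t$, is precisely the right-hand side of (\ref{17}); the remaining outer optimization over $t$ is a one-dimensional problem handled afterwards, and Lemma~\ref{RcLemma} is what makes $R_c$ (hence (\ref{17})) an explicit function of $t$, although it is not needed for the inner step itself.

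The first step is the ordering $R_1\ge R_2$ whenever RSMA is activated. Since $|\mathbf h_1|\ge|\mathbf h_2|$ gives $\rho_1\ge\rho_2$, we have $\Gamma=1/\rho_2-1/\rho_1\ge 0$, so by (\ref{WF}) the activated regime $tP>\Gamma$ yields $P_1=(tP+\Gamma)/2\ge(tP-\Gamma)/2=P_2$; combined with $\rho_1\ge\rho_2$ this gives $\rho_1P_1\ge\rho_2P_2$, i.e., $R_1\ge R_2$ by (\ref{11}). (Equivalently, the water level obeys $1+\rho_kP_k=\rho_k\mu$, so $R_k=\log_2(\rho_k\mu)$ is monotone in $\rho_k$.)

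The second step is the inner maximization $\max\{\min\{R_1+C_1,\ R_2+C_2\}:C_1,C_2\ge 0,\ C_1+C_2\le R_c\}$. I would first argue that (\ref{maxminb}) is tight at any optimum: if $C_1+C_2<R_c$, increasing $C_2$ until equality holds never decreases $\min\{R_1+C_1,R_2+C_2\}$. Substituting $C_2=R_c-C_1$ with $C_1\in[0,R_c]$, the objective is $g(C_1)=\min\{R_1+C_1,\ R_2+R_c-C_1\}$, a concave piecewise-linear function whose increasing and decreasing branches cross at $C_1^\circ=(R_c+R_2-R_1)/2$. Since $R_2\le R_1$, always $C_1^\circ\le R_c$, so only its sign matters. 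If $R_c>R_1-R_2$, then $C_1^\circ\in(0,R_c]$ is feasible and optimal, giving $C_1^*=(R_c-R_1+R_2)/2$, $C_2^*=(R_c+R_1-R_2)/2$, and $R_{\text{MMF}}^{\text{RSMA}}=R_1+C_1^*=(R_1+R_2+R_c)/2$. If $R_c\le R_1-R_2$, then $C_1^\circ\le 0$, so on $[0,R_c]$ the decreasing branch dominates (indeed $R_1+C_1\ge R_1\ge R_2+R_c\ge R_2+R_c-C_1$), and $g$ is maximized at $C_1^*=0$, $C_2^*=R_c$, with $R_{\text{MMF}}^{\text{RSMA}}=R_2+R_c$. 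These are the two cases in the statement; since $R_2+R_c=\frac12(R_1+R_2+R_c)-\frac12(R_1-R_2-R_c)$, collecting both cases through the term $\max\{R_1-R_2-R_c,0\}$ reproduces the single expression (\ref{17}).

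The argument is elementary once $R_1\ge R_2$ is established; the only points that need care are (i) justifying that the budget (\ref{maxminb}) is active at the optimum, and (ii) checking $0\le C_1^*\le R_c$ (and likewise for $C_2^*$) in each branch so that the claimed maximizer is feasible. I do not expect any genuine obstacle beyond this bookkeeping.
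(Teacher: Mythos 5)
Your proof is correct, and it reaches the paper's result by a slightly different elementary route. The paper proves Lemma~\ref{maxminLemma} by writing $\min\{R_{1,tot},R_{2,tot}\}=\frac{1}{2}\left(R_1+C_1+R_2+C_2-|R_1+C_1-R_2-C_2|\right)$, observing that the constraints force $C_1-C_2\in[-R_c,R_c]$ with the budget taken tight, and then minimizing $|R_1-R_2+(C_1-C_2)|$ over that interval to obtain the $\max\{R_1-R_2-R_c,0\}$ term; you instead argue directly that (\ref{maxminb}) is active at an optimum, substitute $C_2=R_c-C_1$, and maximize the concave piecewise-linear function $g(C_1)=\min\{R_1+C_1,\,R_2+R_c-C_1\}$ over $[0,R_c]$ by locating its kink. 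The two arguments are equivalent in substance (your $C_1^\circ$ is exactly the point where the paper's absolute value vanishes or hits the boundary of the interval), but your version buys one thing the paper leaves implicit: the identity $\min_{|x|\le R_c}|R_1-R_2+x|=\max\{R_1-R_2-R_c,0\}$, and hence (\ref{17}) itself, is only valid because $R_1\ge R_2$; if $R_2-R_1>R_c$ the formula would fail. You establish $R_1\ge R_2$ in the activated region explicitly from $\|\mathbf h_1\|\ge\|\mathbf h_2\|$, (\ref{WF}) and (\ref{11}), which is a worthwhile addition (and also clarifies the scope of Remark~1, which asserts the lemma for arbitrary power allocations and tacitly needs this ordering). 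Your feasibility bookkeeping for $C_1^*,C_2^*$ in each branch is also sound.
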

    
    \begin{proof}
        See Appendix \ref{maxminProof}.
    \end{proof}
    
    \begin{remark}
     Despite the fact that we fix the precoding directions and adopt the WF solution, the conclusion of Lemma \ref{maxminLemma} is applicable to any specific precoding direction and power allocation within the RSMA activated region.
    \end{remark}
    
    Based on Lemma \ref{maxminLemma}, we could obtain that the optimal $C_1$ and $C_2$ can be represented as a function of $R_1$, $R_2$, and $R_c$.
    We could therefore replace the objective function of (\ref{reducedProblem1}) by (\ref{17}) when RSMA is activated.
    Let $R_1 = R_2 = 0$ in (\ref{17}), we could obtain the MMF rate of Multicast as $R_{\text{MMF}}^{\text{Multicast}} = R_c / 2$.
    Let $R_c = 0$ in (\ref{17}), the MMF rate of SDMA is $R_{\text{MMF}}^{\text{SDMA}} = R_2$.
    Note that Lemma \ref{maxminLemma} is not suitable for NOMA.
    The MMF rate of NOMA is $R_{\text{MMF}}^{\text{NOMA}} = \min\{R_1, R_c\}$.
    Then the MMF rates of different strategies can be formulated as
    \begin{equation}\label{20}
        R_{\text{MMF}}^{\text{X}} = 
        \left\{\begin{array}{ll}
            \begin{array}{l}
                \frac{1}{2}(R_1 + R_2 + R_c - \\
                \max\left\{R_1 - R_2 -R_c, 0\right\}),
            \end{array} & \text{if X} = \text{RSMA} \\
            \min\{R_1, R_c\}, & \text{if X} = \text{NOMA} \\
            R_2, & \text{if X} = \text{SDMA} \\
            R_c / 2, & \text{if X} = \text{Multicast} \\
        \end{array}\right.
    \end{equation}
    
    Based on (\ref{20}), problem (\ref{reducedProblem1}) is equivalently transformed to
    \vspace{-5mm}
    \begin{subequations}\label{reducedProblem2}
        \begin{align}
            & \max_{t, \text{X}} R_{\text{MMF}}^{\text{X}} \\
            &\text{s. t. } 0 \leq t \leq 1,
        \end{align}
    \end{subequations}
    where X should be taken over all four strategies.
    
    \begin{theorem}\label{theorem1}
        The optimal $t$ of problem (\ref{reducedProblem2}) falls within the following six points:
        \begin{numcases}{t^* \in }
            \frac{2\rho_{c, 2}P - \rho_1\Gamma - \rho_2\Gamma}{\rho_1 P - \rho_2 P + 2\rho_{c, 2}P}, \label{RSMA1} \\
            \frac{\frac{1}{2}\rho_2\Gamma - \rho_{c, 2}P - 1}{\rho_2P - 2\rho_{c, 2}P} - 
            \frac{1}{\rho_1P} - \frac{\Gamma}{2P}, \label{RSMA2} \\
            \Gamma / P, \label{NOMA1} \\
            \rho_{c, 2} / (\rho_1 + \rho_{c, 2}), \label{NOMA2} \\
            1, \label{SDMA} \\
            0. \label{Multicast}
        \end{numcases}
        In particular, if $t^*$ is equal to (\ref{RSMA1}) or (\ref{RSMA2}),
         RSMA is activated.
        If $t^*$ is equal to (\ref{NOMA1}) or (\ref{NOMA2}),
        RSMA reduces to NOMA.
        If $t^*$ is equal to (\ref{SDMA}),
        RSMA reduces to SDMA.
        If $t^*$ is equal to (\ref{Multicast}),
        RSMA reduces to Multicast.
    \end{theorem}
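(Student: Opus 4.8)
\emph{Proof plan.}
The idea is to regard the objective of problem~(\ref{reducedProblem2}) as the pointwise maximum, over the four strategies, of four single-variable functions of $t$, each defined on the sub-interval of $[0,1]$ identified just before the theorem. Multicast lives only at $t=0$ and SDMA only at $t=1$, so they contribute exactly the candidates (\ref{Multicast}) and (\ref{SDMA}); it then remains to locate the maximizer of the NOMA piece on $(0,\Gamma/P]$ and of the RSMA piece on $(\Gamma/P,1)$, and to show that in each case the maximizer is one of the remaining four listed values.

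For the NOMA piece I would first substitute the water-filling allocation $P_1=tP$, $P_2=0$, $P_c=(1-t)P$ into $R_{c,1}$ and $R_{c,2}$ and use the identity $\rho_{c,1}/\rho_{c,2}=\|\mathbf h_1\|^2/\|\mathbf h_2\|^2=\rho_1/\rho_2$ to show that $R_{c,1}\ge R_{c,2}$ holds precisely when $tP\le\Gamma$; hence on the NOMA region $R_c=R_{c,2}=\log_2(1+\rho_{c,2}(1-t)P)$, so that $R_{\mathrm{MMF}}^{\mathrm{NOMA}}=\min\{R_1,R_c\}$ is the minimum of the strictly increasing function $R_1=\log_2(1+\rho_1tP)$ and the strictly decreasing function $R_{c,2}$. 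The maximum of such a ``min'' over an interval is attained either at the unique point where $R_1=R_{c,2}$ --- solving the affine equation $\rho_1tP=\rho_{c,2}(1-t)P$ gives $t=\rho_{c,2}/(\rho_1+\rho_{c,2})$, i.e.\ (\ref{NOMA2}) --- or, if that point lies outside $(0,\Gamma/P]$, at the endpoint $t=\Gamma/P$, i.e.\ (\ref{NOMA1}); the other endpoint $t\to0^+$ yields objective $0$, which is never the strict optimum (or else is also attained at $t=0$).

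For the RSMA piece I would use Lemma~\ref{RcLemma} to set $R_c=R_{c,1}=R_{c,2}$ and Lemma~\ref{maxminLemma} to replace the objective by $\tfrac12\big(R_1+R_2+R_c-\max\{R_1-R_2-R_c,0\}\big)$. Introducing $a(t)=1+\rho_1P_1$ and $b(t)=1+\rho_2P_2+\rho_{c,2}P_c$, a short computation using the explicit allocation~(\ref{WF}) shows that $a$ and $b$ are \emph{affine} in $t$, that $R_1=\log_2a$ and $R_2+R_c=\log_2b$, and that the two regimes of Lemma~\ref{maxminLemma} are exactly $\{a\ge b\}$ and $\{a<b\}$, each an interval whose closure has endpoints in $\{\Gamma/P,\,1,\,t_{AB}\}$, where $t_{AB}$ solves the affine equation $a=b$ and equals (\ref{RSMA1}). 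On $\{a\ge b\}$ the objective is $\log_2b$, which is monotone in $t$, so its maximum over that sub-interval is at one of those three endpoints. On $\{a<b\}$ the objective is $\tfrac12\log_2(ab)$, and maximizing $ab$ --- a quadratic in $t$, being a product of two affine functions --- is elementary: its leading coefficient $a'b'$ is negative exactly when $b$ is decreasing (i.e.\ $\rho_2<2\rho_{c,2}$, since $a$ is always increasing), in which case the unique interior maximizer is the parabola vertex, the midpoint $(t_a+t_b)/2$ of the roots of $a$ and $b$, which a direct calculation identifies with (\ref{RSMA2}); in the remaining cases $ab$ is monotone on the sub-interval and its maximum is again at $\Gamma/P$, $1$, or $t_{AB}$. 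Collecting the candidates from all pieces yields the six points, and the ``in particular'' statements follow because (\ref{RSMA1}) and (\ref{RSMA2}) were produced from the analysis on $(\Gamma/P,1)$ while (\ref{NOMA2}), whenever it is the optimum, lies in $(0,\Gamma/P]$.

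The step I expect to be the main obstacle is the RSMA piece: one must verify carefully that $a$ and $b$ really are affine in $t$ under~(\ref{WF}) (so the objective collapses to the logarithm of an affine or a quadratic function), correctly match the kink point $a=b$ with the closed form (\ref{RSMA1}) and the parabola vertex with the rather opaque expression (\ref{RSMA2}), and check that every degenerate configuration --- the parabola opening upward, the vertex falling outside the feasible sub-interval, an empty regime $\{a\ge b\}$ or $\{a<b\}$, or the water level touching a corner --- collapses to the boundary values $\Gamma/P$ or $1$ rather than producing a new stationary point.
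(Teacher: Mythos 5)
Your plan is correct and follows essentially the same route as the paper's proof: partition $[0,1]$ by strategy at $t=0$, $t=\Gamma/P$, $t=1$, use Lemma~\ref{RcLemma} and Lemma~\ref{maxminLemma} on the RSMA region so that the regime boundary $a=b$ yields (\ref{RSMA1}) and the interior maximizer of $\tfrac12\log_2(ab)$ yields (\ref{RSMA2}) (your vertex-of-the-parabola computation gives exactly the paper's stationary point from $\partial R_{\mathrm{MMF}}^{\mathrm{RSMA}}/\partial t=0$), and obtain (\ref{NOMA2}) as the crossing point of the increasing $R_1$ and decreasing $R_{c,2}$ on the NOMA region. Your added verifications (that $R_{c,1}\ge R_{c,2}$ iff $tP\le\Gamma$ via $\rho_{c,1}/\rho_{c,2}=\rho_1/\rho_2$, and that degenerate/monotone cases collapse to the endpoint candidates) are details the paper asserts or leaves implicit, not a different argument.
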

    
    \begin{proof}
        See Appendix \ref{proof1}.
    \end{proof}
    
    According to Theorem \ref{theorem1}, we then obtain the optimal solution by substituting the six candidate optimal $t$ back to (\ref{reducedProblem2}) and selecting the one that achieves the highest MMF rate.
    Our algorithm always chooses the best strategy from RSMA, SDMA, NOMA and Multicast.
    As the optimal $t^*$, $C_1^*$ and $C_2^*$ are all calculated in closed form, the computational complexity of our algorithm is $O(N_t)$. 
    
    At high SNR, since $\Gamma / P \rightarrow 0$ as $P \rightarrow +\infty$, NOMA is not a suitable strategy anymore.
    Let $t_1 = \frac{2\rho_{c, 2}P - \rho_1\Gamma - \rho_2\Gamma}{\rho_1 P - \rho_2 P + 2\rho_{c, 2}P}$
    and $t_2 = \frac{\frac{1}{2}\rho_2\Gamma - \rho_{c, 2}P - 1}{\rho_2P - 2\rho_{c, 2}P} - 
    \frac{1}{\rho_1P} - \frac{\Gamma}{2P}$ be the optimal $t$ candidates for RSMA. If $0 < |\bar{\mathbf h}_1^H\bar{\mathbf h}_2| < 1$ and $\|\mathbf{h}_1\| \neq \|\mathbf{h}_2\|$,  the following inequalities hold
    \begin{equation}
        \begin{aligned}
            &\lim_{P \rightarrow +\infty}
            \left(R_{\text{MMF}}^{\text{RSMA}}|_{t = t_1} - R_{\text{MMF}}^{\text{SDMA}}|_{t = 1}\right) \\
            = &\log_2\left(\frac{2\rho_1\rho_{c, 2}}{\rho_2(\rho_1 - \rho_2 + 2\rho_{c, 2})}\right) > 0
        \end{aligned}
    \end{equation}
    and
    \begin{align}
        \begin{aligned}
            &\lim_{P \rightarrow +\infty}
            \left(R_{\text{MMF}}^{\text{RSMA}}|_{t = t_2} - R_{\text{MMF}}^{\text{SDMA}}|_{t = 1}\right) \\
            =& \frac{1}{2}\log_2\left(\frac{\rho_1\rho_{c, 2}^2}{\rho_2^2(2\rho_{c, 2} - \rho_2)}\right) > 0,
        \end{aligned}
    \end{align}
    which implies that RSMA brings a constant MMF rate gain over SDMA in the high SNR regime.
    
    It can also be easily verified that the MMF rate gain of RSMA over Multicast grows unbounded as $P \rightarrow +\infty$.
    
    Therefore, the optimal $t$ of Problem (\ref{reducedProblem2}) will be either $\frac{2\rho_{c, 2}}{\rho_1 - \rho_2 + 2\rho_{c, 2}}$ or $\frac{\rho_{c, 2}}{2\rho_{c, 2} - \rho_2}$, when $P \rightarrow +\infty$, and only RSMA is activated. 
    
    \vspace{-3mm}
    \section{Numerical Results}

    In this section, we evaluate the performance of the proposed algorithm and compare it with some other methods listed below:
    \begin{enumerate}
        \item WMMSE-based algorithm \cite{robust}. Its worst-case computational complexity is $\mathcal{O}((\log(\epsilon^{-1})(KN_t)^{3.5})$.
        \item SCA-based algorithm \cite{mao2018energy}. Its worst-case computational complexity is $\mathcal{O}(\log(\epsilon^{-1})(KN_t)^{3.5})$.
        \item GPI-based algorithm \cite{kim2022max}. Its worst-case computational complexity is $\mathcal{O}((\#\gamma)\log(\epsilon^{-1})KN_t^{3})$,
        where $\#\gamma$ is the number of exhaustive search of Lagrangian multiplier $\gamma$ as specified in \cite{kim2022max}.
    \end{enumerate}
    Two-user cases with perfect CSIT are considered in all the experiments.
    
   \par
   In Fig. \ref{figure11}, the BS has $N_t=2$ transmit antennas.
   Fig. \ref{figure11}(a) and Fig. \ref{figure11}(b) illustrate
    the MMF rate versus SNR of different strategies.
    The channel for each user has independent identically distributed complex Gaussian entries, i.e.,
    $\mathbf{h}_k\sim \mathcal{CN}(0, \sigma_k^2)$. The results are averaged over 100 random channels.
    We observe that the proposed algorithm attains an average of $92.9\%$ and $93.0\%$ of the MMF rates obtained by RSMA in Fig. \ref{figure11}(a) and Fig. \ref{figure11}(b), respectively.
    The rates obtained by the proposed algorithm outperform those of other multiple access schemes (except RSMA) based on SCA/WMMSE at moderate and high SNR regimes.
    Moreover, our algorithm outperforms GPI-based algorithm in terms of the MMF rate.

    \par 
    Fig. \ref{figure11}(c) shows the CPU time versus SNR, averaged over $100$ random channels.
    The CPU time of the proposed  algorithm is three orders of magnitude lower than that of WMMSE/SCA/GPI algorithm, since
    there is no iterative step in the proposed algorithm.

    \begin{figure*}[htbp]
        \centering
        \includegraphics[width=\textwidth]{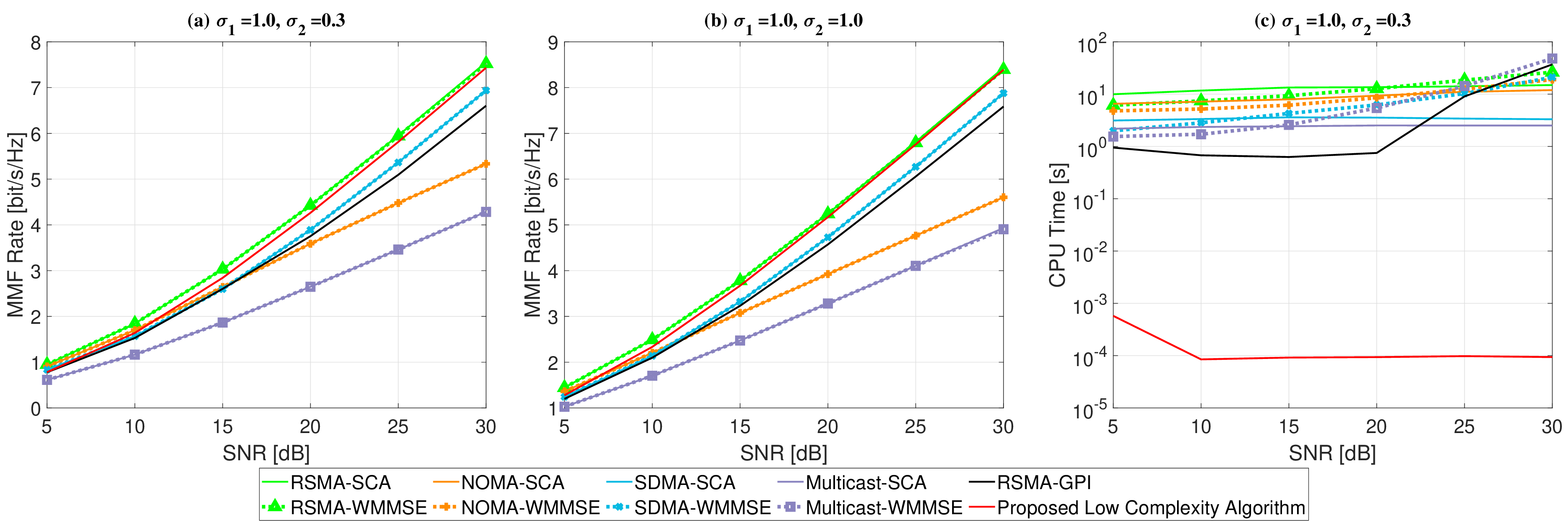} 
        \caption{MMF rate or CPU time versus SNR, averaged over 100 random channel realizations.}
        \label{figure11}
        \vspace{-0.5cm}
    \end{figure*}

    Fig. \ref{optimalT} shows the optimal $t$ that maximizes the
    MMF rate obtained by our proposed algorithm.
    Following \cite{unifying}, we use specific channels, i.e., $\mathbf{h}_1 = \frac{1}{\sqrt{2}}[1, 1]^H$ and
    $\mathbf{h}_2 = \frac{\gamma}{\sqrt{2}}[1, e^{j\theta}]^H$.
    The $x$-axis and $y$-axis of Fig. \ref{optimalT} are defined as $\rho = 1 - |\bar{\mathbf h}_1^H\bar{\mathbf h}_2| ^ 2$
    and $\gamma_{\text{dB}} = 20\log_{10}(\gamma)$, respectively.
    As SNR increases, the actived region of NOMA becomes smaller while the region of RSMA becomes larger.
    At high SNR, for example SNR = 30dB, almost all the region when $-15\text{dB} < \gamma_{\text{dB}} < 0$ is RSMA.
    This implies that RSMA is preferred in the high SNR regime. This finding aligns with the theoretical analysis presented in Section \ref{sec: algorithm}.
%    SDMA is only preferred at large $\rho$ and $\gamma$ close to
%    $1$, in other words, when the channels are orthogonal and channel strength has similar values, SDMA should be used.

    \begin{figure*}[htbp]
        \centering
        \subfloat{\includegraphics[width=0.32\textwidth]{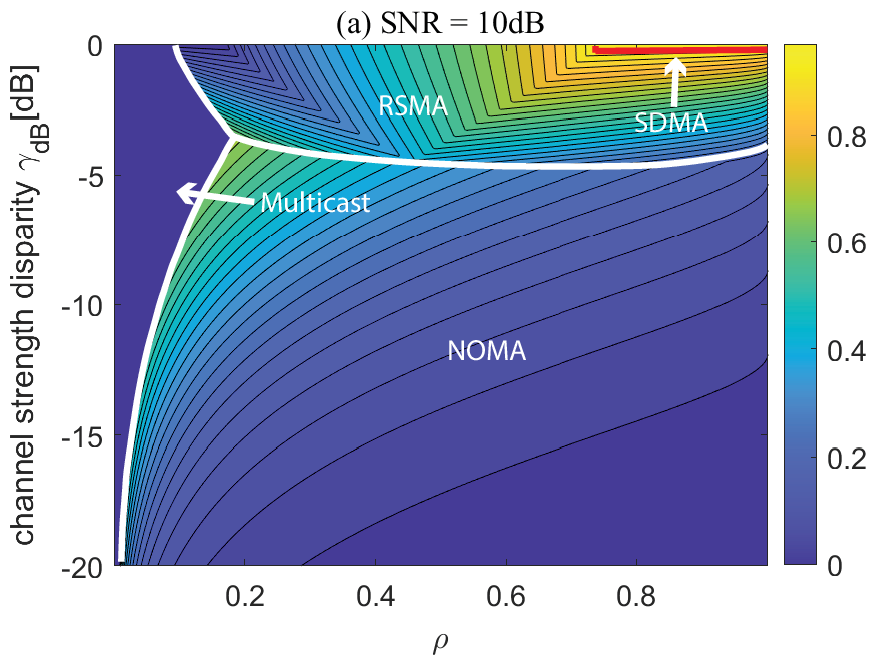}}
        \hfill
        \subfloat{\includegraphics[width=0.32\textwidth]{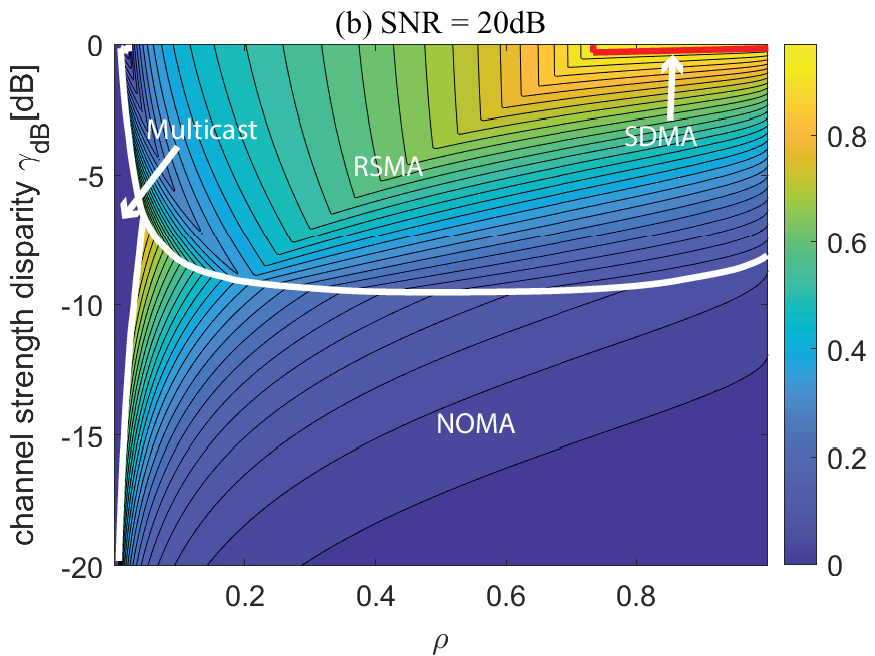}}
        \hfill
        \subfloat{\includegraphics[width=0.32\textwidth]{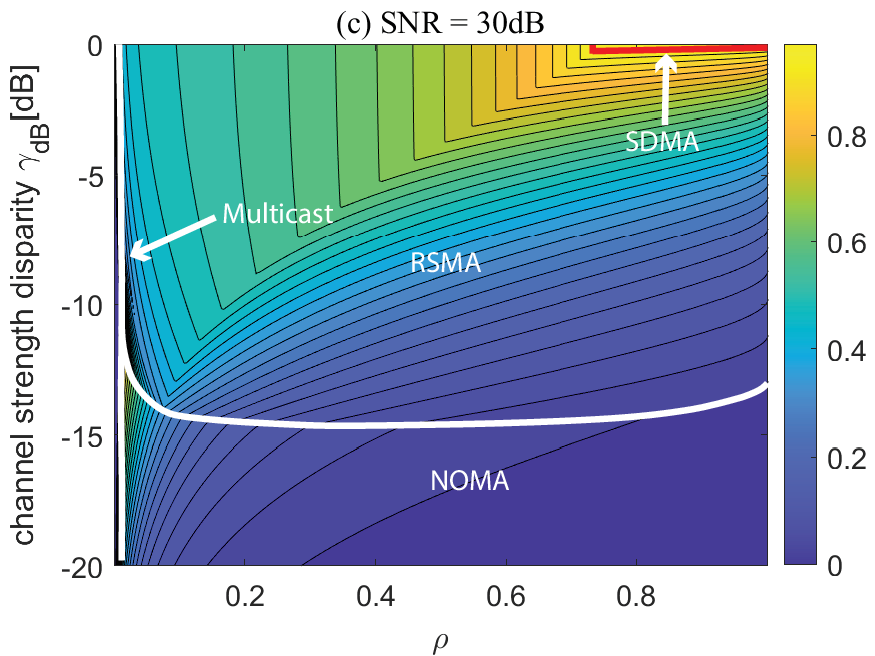}}

        \vspace{-4mm}
        \caption{Optimal $t$ obtained by our proposed algorithm.}
        \label{optimalT}
        \vspace{-5mm}
    \end{figure*}

    Fig. \ref{relativeGain} illustrates the relative rate gain of RSMA compared to the dynamic switching approach between SDMA, NOMA, and Multicast.
    The relative gain is calculated by 
    \begin{equation*}
        \frac{R_{\text{MMF}}^{\text{RSMA}} - \max\{R_{\text{MMF}}^{\text{SDMA}}, R_{\text{MMF}}^{\text{NOMA}}, R_{\text{MMF}}^{\text{Multicast}}\}}{\max\{R_{\text{MMF}}^{\text{SDMA}}, R_{\text{MMF}}^{\text{NOMA}}, R_{\text{MMF}}^{\text{Multicast}}\}}\times 100\%.
    \end{equation*}
    The percentages in parentheses represent MMF rate gains of RSMA over SDMA, NOMA, and Multicast, respectively.
    RSMA brings explicit gains over other strategies in the activated region of RSMA.
   % In particular, RSMA outperforms NOMA with several times rate gain in many points, which implies NOMA is not preferred even compared to SDMA and Multicast in many situations.
    
    \begin{figure}[htbp]\label{relativeGain30dB}
        \centering
        \includegraphics[width=0.35\textwidth]{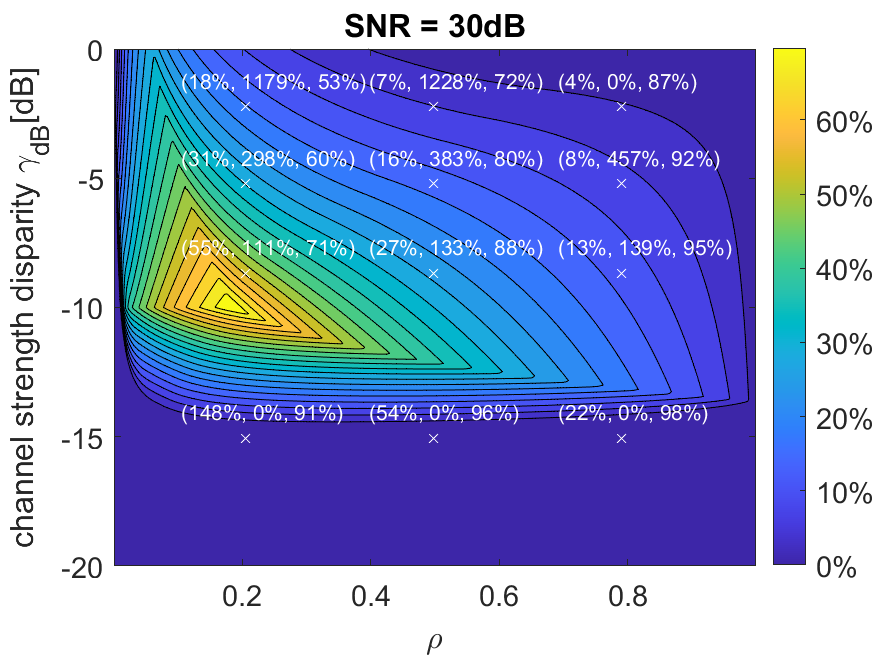}
        \vspace{-4mm}
        \caption{Relative MMF rate gain of RSMA compared to dynamic switching between SDMA, NOMA and Multicast. The percentages in parentheses represent MMF rate gains over SDMA, NOMA and Multicast, respectively.}
        \label{relativeGain}
        \vspace{-5mm}
    \end{figure}

    \vspace{-3mm}
    \section{Conclusion}
    \vspace{-2mm}
    In this work, we propose a novel power allocation and common rate allocation algorithm to achieve the MMF of RSMA.  With a practical precoding direction design based on zero-forcing and water-filling power allocation for the private streams,
    we obtain the closed-form solution for the optimal allocation of common rate and power between the common and private streams.
    This is the first work that derives a closed-form solution to MMF problem of RSMA.
    The computational complexity of the proposed algorithm is three orders of magnitude lower than that of the WMMSE/SCA/GPI algorithm.
    The MMF rate of our algorithm is within $90\%$ of WMMSE/SCA and outperforms GPI-based algorithm.
    Therefore, we draw the conclusion that the proposed resource allocation algorithm for RSMA is practical and efficient.
    The future work of this letter is to expand the analysis to the $K$-user setting.

    \vspace{-3mm}
    \bibliographystyle{IEEEtran}{}
    \bibliography{IEEEexample}

% Generated by IEEEtran.bst, version: 1.14 (2015/08/26)
\begin{thebibliography}{10}
\providecommand{\url}[1]{#1}
\csname url@samestyle\endcsname
\providecommand{\newblock}{\relax}
\providecommand{\bibinfo}[2]{#2}
\providecommand{\BIBentrySTDinterwordspacing}{\spaceskip=0pt\relax}
\providecommand{\BIBentryALTinterwordstretchfactor}{4}
\providecommand{\BIBentryALTinterwordspacing}{\spaceskip=\fontdimen2\font plus
\BIBentryALTinterwordstretchfactor\fontdimen3\font minus
  \fontdimen4\font\relax}
\providecommand{\BIBforeignlanguage}[2]{{%
\expandafter\ifx\csname l@#1\endcsname\relax
\typeout{** WARNING: IEEEtran.bst: No hyphenation pattern has been}%
\typeout{** loaded for the language `#1'. Using the pattern for}%
\typeout{** the default language instead.}%
\else
\language=\csname l@#1\endcsname
\fi
#2}}
\providecommand{\BIBdecl}{\relax}
\BIBdecl

\bibitem{outperform}
Y.~Mao, B.~Clerckx, and V.~O. Li, ``Rate-splitting multiple access for downlink
  communication systems: bridging, generalizing, and outperforming {SDMA} and
  {NOMA},'' \emph{EURASIP journal on wireless communications and networking},
  vol. 2018, pp. 1--54, 2018.

\bibitem{phylayer}
B.~Clerckx, H.~Joudeh, C.~Hao, M.~Dai, and B.~Rassouli, ``Rate splitting for
  {MIMO} wireless networks: A promising {PHY}-layer strategy for {LTE}
  evolution,'' \emph{IEEE Communications Magazine}, vol.~54, no.~5, pp.
  98--105, 2016.

\bibitem{mao2022rate}
Y.~Mao, O.~Dizdar, B.~Clerckx, R.~Schober, P.~Popovski, and H.~V. Poor,
  ``Rate-splitting multiple access: Fundamentals, survey, and future research
  trends,'' \emph{IEEE Communications Surveys \& Tutorials}, 2022.

\bibitem{robust}
H.~Joudeh and B.~Clerckx, ``Robust transmission in downlink multiuser {MISO}
  systems: A rate-splitting approach,'' \emph{IEEE Transactions on Signal
  Processing}, vol.~64, no.~23, pp. 6227--6242, 2016.

\bibitem{yalcin2021}
A.~Z. Yalçın and Y.~Yapıcı, ``Max-min fair beamforming for cooperative
  multigroup multicasting with rate-splitting,'' \emph{IEEE Transactions on
  Wireless Communications}, vol.~20, no.~1, pp. 254--268, 2021.

\bibitem{kim2022max}
D.~Kim, J.~Choi, J.~Park, and D.~K. Kim, ``Max--min fairness beamforming with
  rate-splitting multiple access: optimization without a toolbox,'' \emph{IEEE
  Wireless Communications Letters}, vol.~12, no.~2, pp. 232--236, 2022.

\bibitem{mao2018energy}
Y.~Mao, B.~Clerckx, and V.~O. Li, ``Energy efficiency of rate-splitting
  multiple access, and performance benefits over {SDMA} and {NOMA},'' in
  \emph{2018 15th International Symposium on Wireless Communication Systems
  (ISWCS)}.\hskip 1em plus 0.5em minus 0.4em\relax IEEE, 2018, pp. 1--5.

\bibitem{lee2022max}
B.~Lee and W.~Shin, ``Max-min fairness precoder design for rate-splitting
  multiple access: Impact of imperfect channel knowledge,'' \emph{IEEE
  Transactions on Vehicular Technology}, vol.~72, no.~1, pp. 1355--1359, 2023.

\bibitem{li2023}
T.~Li, H.~Zhang, S.~Guo, and D.~Yuan, ``Max-min fair {RIS}-aided rate-splitting
  multiple access for multigroup multicast communications,'' \emph{China
  Communications}, vol.~20, no.~1, pp. 184--198, 2023.

\bibitem{unifying}
B.~Clerckx, Y.~Mao, R.~Schober, and H.~V. Poor, ``Rate-splitting unifying
  {SDMA}, {OMA}, {NOMA}, and multicasting in {MISO} broadcast channel: A simple
  two-user rate analysis,'' \emph{IEEE Wireless Communications Letters},
  vol.~9, no.~3, pp. 349--353, 2019.

\end{thebibliography}

    % if have a single appendix:
    %\appendix[Proof of the Zonklar Equations]
    % or
    %\appendix  % for no appendix heading
    % do not use \section anymore after \appendix, only \section*
    % is possibly needed
    
    % use appendices with more than one appendix
    % then use \section to start each appendix
    % you must declare a \section before using any
    % \subsection or using \label (\appendices by itself
    % starts a section numbered zero.)
    %
     \vspace{-5mm}
    
    \appendices
    \section{Proof of Lemma \ref{RcLemma}}\label{RcProof}
    \vspace{-1mm}
    \begin{proof}
        According to (\ref{10}), we only need to prove that $\frac{\rho_{c, 1}}{1 + \rho_1 P_1} = \frac{\rho_{c, 2}}{1 + \rho_2 P_2}$
        when $\Gamma/P < t < 1$.
        
        If $tP > \Gamma$, according to (\ref{WF}), we have
        \begin{equation}\label{25}
            P_1 = \frac{1}{2}(tP + \Gamma),\ P_2 = \frac{1}{2}(tP - \Gamma).
        \end{equation}
        Hence,
        \begin{equation}
            \frac{\rho_{c, 1}}{1 + \rho_1 P_1} =
            \frac{2|\bar{{\mathbf h}}_1^H\bar{\mathbf p}_c|^2}
            {\rho tP + \frac{1}{\|{\mathbf h}_1\|^2} + \frac{1}{\|{\mathbf h}_2\|^2}},
        \end{equation}
        \begin{equation}
            \frac{\rho_{c, 2}}{1 + \rho_2 P_2} =
            \frac{2|\bar{{\mathbf h}}_2^H\bar{\mathbf p}_c|^2}
            {\rho tP + \frac{1}{\|{\mathbf h}_1\|^2} + \frac{1}{\|{\mathbf h}_2\|^2}}.
        \end{equation}
        Since $\bar{\mathbf p}_c$ is obtained when $|\bar{\mathbf h}_1^H\bar{\mathbf p}_c| = |\bar{\mathbf h}_2^H\bar{\mathbf p}_c|$,
        it can be easily derived that
        \begin{equation}
            \frac{\rho_{c, 1}}{1 + \rho_1 P_1} =
            \frac{\rho_{c, 2}}{1 + \rho_2 P_2}.
        \end{equation}
    \end{proof}

     \vspace{-3mm}
    
    \section{Proof of Lemma \ref{maxminLemma}}\label{maxminProof}
    \begin{proof}
        According to $\min\{a, b\} = \frac{1}{2}(a + b - |a - b|)$,
        the minimum rate of the two users can be rewritten as
        \begin{equation}\label{absoluteValue}
            \begin{aligned}
                &\min \{R_{1, tot}, R_{2, tot}\} \\
                =& \frac{1}{2}(R_1 + C_1 + R_2 + C_2 - |R_1 + C_1 - R_2 - C_2|) \\
                =& \frac{1}{2}(R_1 + R_2 + R_c - | R_1 - R_2 + (C_1 - C_2) |).
            \end{aligned}
        \end{equation}
        
        Since $C_1 + C_2 \leq R_c$, $C_1 \geq 0$, $C_2 \geq 0$, $C_1 - C_2$
        can be bounded by
        \begin{equation}\label{bound}
            -R_c \leq C_1 - C_2 \leq R_c,
        \end{equation}
        where $C_1 - C_2 = R_c$ when $C_1 = R_c$, $C_2 = 0$ and $C_1 - C_2 = -R_c$ when $C_1 = 0$, $C_2 = R_c$.
        
        By combining (\ref{absoluteValue}) and (\ref{bound}), we have
        \begin{equation}
            \begin{aligned}
                &\max\min \{R_{1, tot}, R_{2, tot}\} \\
                =& \frac{1}{2}\left(R_1 + R_2 + R_c -
                \min_{C_1, C_2}\left| R_1 - R_2 + (C_1 - C_2) \right|\right) \\
                =& \frac{1}{2}\left(R_1 + R_2 + R_c -
                \max\left\{R_1 - R_2 -R_c, 0\right\}\right) \\
                = &\begin{cases}
                    R_2 + R_c, &{\text{if}}\ R_c \leq R_1 - R_2, \\
                    \frac{1}{2}(R_1 + R_2 + R_c), &{\text{if}}\ R_c > R_1 - R_2.
                \end{cases}
            \end{aligned}
        \end{equation}
        The corresponding optimal $C_1$ and $C_2$ are
        \begin{align}
            &\left\{
            \begin{array}{ll}
                C_1^* = 0, \\
                C_2^* = R_c,
            \end{array}\right.
            &{\text{if}}\ R_c \leq R_1 - R_2, \\
            &\left\{
            \begin{array}{ll}
                C_1^* = \frac{1}{2}(R_c - R_1 + R_2), \\
                C_2^* = \frac{1}{2}(R_c + R_1 - R_2),
            \end{array}\right.
            &{\text{if}}\ R_c > R_1 - R_2.
        \end{align}
    \end{proof}

    \vspace{-6mm}
    
    \section{Proof of Theorem \ref{theorem1}}\label{proof1}
    \begin{proof}
        As $t$ ranges from $0$ to $1$, the MMF rate is a piecewise
        function about $t$. Therefore the optimal $t$ must be one of
        critical points. Obviously $0$, $1$ and $\Gamma / P$ are the
        critical points.
        
        Next we give the critical points for RSMA, where $\Gamma / P < t < 1$.
        In this case, we use (\ref{25}) for power allocation.
        According to Lemma \ref{RcLemma} and Lemma \ref{maxminLemma},
        when $R_c > R_1 - R_2$, i.e., $t < \frac{2\rho_{c, 2}P - \rho_1\Gamma - \rho_2\Gamma}
        {\rho_1 P - \rho_2 P + 2\rho_{c, 2}P}$,
        \vspace{-2mm}
        \begin{equation}\label{RSMArate1}
            \begin{aligned}
                R_{\text{MMF}}^{\text{RSMA}}
                &=\frac{1}{2}\left(R_1 + R_2 + R_c\right) \\
                &=\frac{1}{2}\log_2 ((1 + \rho_1 P_1)(1 + \rho_2 P_2 + \rho_{c, 2}P_c)).
            \end{aligned}
        \end{equation}
        Then we can easily find the critical point of equation (\ref{RSMArate1})
        \begin{equation}
            \begin{aligned}
                &\arg\left\{\frac{\partial R_{\text{MMF}}^{\text{RSMA}}}{\partial t} = 0\right\} \\
                = &\frac{\frac{1}{2}\rho_2\Gamma - \rho_{c, 2}P - 1}
                {\rho_2P - 2\rho_{c, 2}P} - 
                \frac{1}{\rho_1P} - \frac{\Gamma}{2P}.
            \end{aligned}
        \end{equation}
        When $R_c \leq R_1 - R_2$,
        \begin{equation}\label{RSMArate2}
            R_{\text{MMF}}^{\text{RSMA}}
            = R_2 + R_c 
            = \log_2 (1 + \rho_2 P_2 + \rho_{c, 2}P_c).
        \end{equation}
        The critical point of equation (\ref{RSMArate2}) is $\frac{2\rho_{c, 2}P - \rho_1\Gamma - \rho_2\Gamma}
        {\rho_1 P - \rho_2 P + 2\rho_{c, 2}P}$.
        
        When NOMA is considered, i.e., $t < \Gamma / P$, since $R_{c, 1} \geq R_{c, 2}$ when $t \leq \min\{\Gamma / P, 1\}$, we have
        \begin{equation}
            R_c = \min\{R_{c, 1}, R_{c, 2}\} = R_{c, 2} = \log_2(1 + \rho_{c, 2}P_c).
        \end{equation}
        Therefore, the MMF rate of NOMA is
        \begin{equation}
            \begin{aligned}
                R_{\text{MMF}}^{\text{NOMA}} &= \min\{R_1, R_c\} \\
                &= \min\left\{\log_2(1 + \rho_1P_1), \log_2(1 + \rho_{c2}P_c)\right\}.
            \end{aligned}
        \end{equation}
        The critical point can be found at
        $\rho_{c, 2} / (\rho_1 + \rho_{c, 2})$ where $\log_2(1 + \rho_1P_1) = \log_2(1 + \rho_{c2}P_c)$.
    \end{proof}
    
    % you can choose not to have a title for an appendix
    % if you want by leaving the argument blank

    % Can use something like this to put references on a page
    % by themselves when using endfloat and the captionsoff option.
    \ifCLASSOPTIONcaptionsoff
    \newpage
    \fi

\end{document}